\newcommand{\Bbar}{{\mathchoice
		{\smash@bar\textfont\displaystyle{0.509}{6.2} \mathrm{B}}
		{\smash@bar\textfont\textstyle{0.99}{5.2} B}
		{\smash@bar\scriptfont\scriptstyle{0.25}{1.2} B}
		{\smash@bar\scriptscriptfont\scriptscriptstyle{0.25}{1.2} B}
}}
\newcommand{\Abar}{{\mathchoice
		{\smash@bar\textfont\displaystyle{0.509}{6.2} \mathrm{A}}
		{\smash@bar\textfont\textstyle{0.99}{5.2} A}
		{\smash@bar\scriptfont\scriptstyle{0.25}{1.2} A}
		{\smash@bar\scriptscriptfont\scriptscriptstyle{0.25}{1.2} A}
}}
\newcommand{\smash@bar}[4]{%
	\smash{\rlap{\raisebox{-#3\fontdimen5#10}{$\m@th#2\mkern#4mu\mathchar'26$}}}%
}
\theoremstyle{remark}
\newcommand\ASTART{\bigskip\noindent\begin{minipage}[b]{0.5\linewidth}}
	\newcommand\AENDSKIP{\end{minipage}\bigskip}
\newcommand\AEND{\end{minipage}}
\let\MYoriglatexcaption\caption
\renewcommand{\caption}[2][\relax]{\MYoriglatexcaption[#2]{#2}}
\theoremstyle{plain}
\newtheorem{thm}{\textbf{Theorem}}
\theoremstyle{definition}
\theoremstyle{remark}
\newcommand*{\rom}[1]{\expandafter\@slowromancap\romannumeral #1@}
\newcommand{\RN}[1]{%
\textup{\uppercase\expandafter{\romannumeral#1}}%
}
\def\change{black}
\def\changes{black}
\def\changess{black}
\begin{document}
%
%\onecolumn
% paper title
% can use linebreaks \\ within to get better formatting as desired
\title{Blind Two-Dimensional Super-Resolution in Multiple-Input Single-Output Linear Systems}
\author{Shahedeh~Sayyari, Sajad~Daei, and Farzan~Haddadi
	\thanks{The authors are with the School of Electrical Engineering, Iran University of Science \& Technology.}
}%

% make the title area
\maketitle

\begin{abstract}
	In this paper, we consider a multiple-input single-output (MISO) linear time-varying system whose output is a superposition of scaled and time-frequency shifted versions of inputs. The goal of this paper is to determine system characteristics and input signals from the single output signal. More precisely, we want to recover the continuous time-frequency shift pairs, the corresponding (complex-valued) amplitudes and the input signals from only one output vector. This problem arises in a variety of applications such as radar imaging, microscopy, channel estimation and localization problems. While this problem is naturally ill-posed, by constraining the unknown input waveforms to lie in separate known {\color{\changes}low-dimensional} subspaces, it becomes tractable. More explicitly, we propose a semidefinite program which exactly recovers time-frequency shift pairs and input signals. We prove uniqueness and optimality of the solution to this program. Moreover, we provide a grid-based approach which can significantly reduce computational complexity in exchange for adding a small gridding error. Numerical results confirm the ability of our proposed method to exactly recover the unknowns.	
	%In many applications, output of the system is a superposition of time and frequency shifted versions of multiple inputs. The goal of this paper is to determine system characteristics and input signals from the single output signal. More precisely, we want to recover the continuous time-frequency shift pairs, the corresponding (complexed valued) amplitudes and the input signals from only one output vector. This problem. So, we assume the unknown input waveforms lie in separate known low-dimensional subspaces. We use a semidefinite program which exactly recovers time-frequency shift pairs and input signals. We prove uniqueness and optimality of the solution of this program. Numerical results confirm the ability of our proposed method to exactly recover the unknowns.
\end{abstract}
% Note that keywords are not normally used for peerreview papers.
\vspace{-0.2cm}
\begin{IEEEkeywords}
	Super-resolution, {\color{\changes}linear time-varying systems}, semidefinite programming, convex optimization.
\end{IEEEkeywords}

\IEEEpeerreviewmaketitle

\vspace{-0.2cm}
\section{Introduction}
\IEEEPARstart{S}{uper}-resolution is the problem of recovering {\color{\changes}high-resolution} information from {\color{\changes}low-resolution} data. In this letter, we assume a linear time-varying (LTV) {\color{\changess}multiple-input single-output (MISO)} system in which the inputs are {\color{\changes}continuous-time band-limited} arbitrary signals $x_j(t)$ and the output vector $y(t)$ is a weighted superposition of time and frequency shifted versions of the inputs:
\vspace{-0.1cm}
\begin{align} \label{A}
&y(t)=\sum_{j=0}^{N_{I}-1}\sum_{k=1}^S{b}_k {x}_j(t-\widetilde{\tau}_k)e^{i2\pi\widetilde{\nu}_k {\color{\change} t}}.
\end{align}
Here, ${b}_k\in\mathbb{C}$ are some unknown weights and $\widetilde{\tau}_k  \in [-T/2,T/2]$, $\widetilde{\nu}_k  \in [-W/2,W/2]$ represent unknown continuous time and frequency shifts, respectively. $\widetilde{\tau}_k$ and $\widetilde{\nu}_k$ are not constrained to lie on a predefined domain of grids. The number of input signals $N_I$ is known and we want to recover the unknowns $({b}_k,\widetilde{\tau}_k,\widetilde{\nu}_k,{x}_j(t),S)$. {\color{\change}We assume that the input signals are of bandwidth $W$ and periodic with period $T$. Also, $y(t)$ is observed over a time interval of length $T$.}
Many applications in communication and signal processing match this model, including radar imaging \cite{heckel2016super}, channel estimation \cite{shin2007blind}, microscopy \cite{mccutchen1967superresolution}, astronomy \cite{starck2002deconvolution} and localization problems \cite{xerri2002passive,amar2004direct}.  
In channel estimation, a wireless channel can be modeled as a LTV system with delay-Doppler shifts \cite{matz2011fundamentals}. A challenging problem in channel estimation is pilot contamination caused by sharing the non-orthogonal pilots among users. One way to deal with this problem is applying techniques that do not need any pilot signal, named blind methods \cite{shin2007blind}. As another example, we can mention spying radar where enemy collocated transmitters send unknown signals to some objects. {\color{\changess}The goal is to detect the intended objects and the transmitted signals from only one receive aperture. It is worth noting that due to the high complexity, it is not always possible to equip the receiver with multiple antennas in many applications among which we can mention unmanned aerial vehicle (UAV) and airborne receivers \cite{miso2010,mimoradarbook2018,sacco2018miso}.} In recent years, super-resolution methods based on convex optimization has attracted much attention \cite{candes2014towards,razavikia2019reconstruction,valiulahi2019two,safari2019off, fernandez2016super, tang2013compressed, duval2015exact, adcock2016generalized, bhaskar2013atomic, yang2015gridless, tan2014direction} due to their superior performance. This approach was first proposed by Candes and Fernandez-Granda in \cite{candes2014towards}. They used total variation norm for exact recovery of 1D spikes under a minimum separation condition with known system function and full measurements. Then,  \cite{ tang2013compressed} provided an atomic norm framework to estimate locations and amplitudes of a spike train in the frequency domain. In \cite{heckel2016super}, the authors apply atomic norm minimization to recover time and frequency shifts in radar application. They adapted super-resolution techniques of \cite{candes2014towards} to a single-input single-output (SISO) system with known {\color{\changes}band-limited} input signal.
The authors in \cite{suliman2018blind} investigate a similar problem, yet intend to estimate the time-frequency shifts in a blind way where the low-pass point spread function applied to the transmitter signal is unknown.
% This was extended to multiple unknown input signals and single output in \cite{suliman2018blind} in a widely separated transmitters scenario where time delay and Doppler shifts depend on the transmitter. 
In this paper, we study the MISO system \eqref{A}. Besides its generality, this model matches the ``collocated'' transmitter scenario with ``many'' targets in MISO radar systems{\color{\changess} \cite{heckel2016mimo,sacco2018miso,mimoradarbook2018}} and differs from the previous models used in prior works. Here, shifts are independent from inputs and only depends on the system function. Moreover, unlike \cite{suliman2018blind}, we assume the input signals belong to ``different'' subspaces with disparate dimensions. While our model in \eqref{A} matches the well-studied model in MISO radar systems{\color{\changess} \cite{heckel2016mimo,sacco2018miso,mimoradarbook2018,miso2010}}, the strategy used in \cite{suliman2018blind,heckel2016super} is not be directly applicable in this setting. 
%{\color{\change}Note that in two-dimentional case, single output is sufficient for full recovery. More precisely, increasing the number of outputs, follows an increase in output data and so we will be able to generalize the problem to higher dimensions. For example, in radar imaging, if we use multiple receive antennas, we can estimate the relative angles in addition to the relative distances and velocities of the objects but using several antennas in two-dimensional mode will lead to repetitive information in the measurment vector. In this paper, we consider two-dimentional case and leave MIMO setting in higher dimensions for future work.    }
Generally, our goal in this paper is to find a strategy to detect the time-frequency shifts $(\widetilde{\tau}_k,\widetilde{\nu}_k),~ k=1,...,S$ as well as the transmitters' signals $x_j(t)$s in the MISO model \eqref{A}. %To this end, we first take samples of \eqref{A} to reach an under-determined system of linear equations in matrix-vector form. We assume that input signals are of band-width $W$ and periodic with $T$. Also, $y(t)$ is observed over a time interval of $T$. Then, by assuming that the input signals lie in different known low-dimensional subspaces, we apply the lifting trick \cite{ahmed2013blind} and formulate the problem as an atomic norm minimization \cite{tang2013compressed}.
%spanned by columns of dictionary matrices $\bm{D}_j\in\mathbb{C}^{L\times K_j}$
%\begin{itemize}
%	\item Input signals are of band-width $W$ and periodic with $T$. . 
%	\item .
%\end{itemize}
%Next, we propose a semidefinite programming (SDP) as a relaxation to the dual problem and find a strategy to detect the time-frequency shifts. The input signals are then estimated by finding the least square solution to the resulting over-determined problem (containing more equations than unknowns) with known time-frequency shifts. Due to the high computational complexity of the proposed SDP in practice, we also provide	a grid-based approach where the time-frequency shifts can be recovered on a fine domain of grids via nuclear norm minimization.
%\vspace{-0.1cm}

{\color{\change}\textit{Notations}: We use boldface lowercase letters for column vectors and boldface uppercase letters for matrices. 
	We employ a two-dimensional index for vectors and matrices indicating how that vector or matrix is arranged i.e. $\left[\bm{a}\right]_{((m,l),1)} ,m,l=-N,\cdots, N$ means that $\bm{a}=[a_{(-N,-N)},a_{(-N,-N+1)},\cdots,a_{(N,N)}]^T$. $\text{conv}(C)$ is employed to denote the convex hull of the set $C$. The element-wise absolute value of the vector $\bm{x}$ is shown by $|\bm{x}|$. Moreover, $(\cdot)^T,(\cdot)^H, (\cdot)^*$ and $Tr(\cdot)$ stand for the transpose, hermitian, conjugate and trace, respectively while $Re \lbrace\cdot\rbrace$ denotes the real part of a complex scalar.
}
\vspace{-0.2cm}
%{\color{\change}\subsection{Notations}
%	We use boldface lowercase letters for column vectors and boldface uppercase letters for matrices. Using two-dimensional index for vectors or matrices refers to how that vector or matrix has been arranged. For example, $\left[\bm{a}\right]_{((m,l),1)} ,m,l=-N,\cdots, N$ means $\bm{a}=[a_{(-N,-N)},a_{(-N,-N+1)},\cdots,a_{(N,N)}]^T$. $\parallel\!.\!\parallel^d_{.}$ denotes the dual norm. Morever, $(\cdot)^T,(\cdot)^H, (\cdot)^*$ and $Tr(\cdot)$ stand for the transpose, the hermitian, the conjugate and the trace, respectively while $Re \lbrace\cdot\rbrace$ denotes the real part of a scaler.
%}
% to answer this question that how can we cope with the existence of multiple transmitters in a LTV system?
% Morever, \cite{suliman2018blind} assumes all of the inputs lie in a common low dimensional subspace while we assume the inputs belong to the \textit{different} subspaces with disparate dimensions. 
%
%In this letter, we study the ``collocated'' transmitter scenario with ``many'' targets in which the delay and Doppler shifts are only dependent on the targets. Moreover, \cite{suliman2018blind} assumes all of the inputs lie in a common low dimensional subspace. We make the following assumptions:
%\begin{itemize}
%	\item Input signals are of band-width $W$ and periodic with $T$. Also, $y(t)$ is observed over a time interval of $T$. 
%	\item Input signals lie in different known low-dimensional subspaces spanned by columns of dictionary matrices $\bm{D}_j\in\mathbb{C}^{L\times K_j}$.
%\end{itemize}
\vspace{-0.1cm}
\section{System Model}\label{section.model}
We first sample $y(t)$ at rate $\frac{1}{W}$ based on 2WT-theorem  \cite{de2016limits} to collect $L:=WT$ samples, assumed to be an odd number. By applying the discrete Fourier transform (DFT) and the inverse DFT (IDFT) {\color{\change} to \eqref{A}} and defining the normalized parameters $\tau_k=\tfrac{\widetilde{\tau}_k}{T}$ and $\nu_k=\tfrac{\widetilde{\nu}_k}{W}$\footnote{Without loss of generality, we assume that $(\tau_k,\nu_k)\in[0,1]^2$.}, we obtain:
\begin{align}\label{B}
\begin{split}
&y(p):=y(p/W)=\\
&\tfrac{1}{L}\sum_{k=1}^Sb_k\sum_{j=0}^{N_I-1} \sum^N_{m,l=-N}x_j(l)e^{\tfrac{i2\pi m(p-l)}{L}}e^{i2\pi (p\nu_k-m\tau_k)}\\
&  \qquad p=-N,\ldots,N ,\quad L:=2N+1.
\end{split}
\end{align}
This is an under-determined linear system with $L$ equations and $N_IL+3S+1$ unknowns. To achieve a unique solution, we impose a subspace constraint \cite{ahmed2013blind, chi2016guaranteed, yang2016super, suliman2018blind, suliman2019exact}, which is common in a wide range of applications \cite{li2016identifiability}. {\color{\change}We assume that each input signal $\bm{x}_j:=[x_j(-N),\ldots,x_j(N)]^{\rm T}$ belongs to a known low-dimensional subspace with basis $\bm{D}_j\in\mathbb{C}^{L\times K_j}, K_j\ll L$, i.e.,
%{\color{\change}which is spanned by the columns of the known matrices $\bm{D}_j\in\mathbb{C}^{L\times K_j}$} 
%with dimension $K_j\ll L$:
\begin{align}\label{C}
\bm{x}_j\!= \!\bm{D}_j \bm{h}_j ,~\bm{D}_j\! =\!\left[\bm{d}_{-N}^j,\ldots , \bm{d}_N^j\!\right]^ \textrm H\!\!\! \!\in\mathbb{C}^{L\times K_j} \!, \ \parallel \!\bm{h}_j\!\!\parallel_2=\!1,
\end{align}  
where ${\lbrace\bm{D}_j\rbrace}_{j=0}^{N_I-1}$ are known matrices whose columns span the signal space and ${\lbrace\bm{h}_j\rbrace}_{j=0}^{N_I-1}$ are unknown orientation vectors. This assumption has practical implications. For instance, in blind multi-user communication systems \cite{ahmed_blind}, the unknown message $\bm{h}_j$ of length $K_j$ corresponding to the $j$-th user is coded using a known tall coding matrix $\bm{D}_j$ and then the redundant coded message is transmitted. Thus,}
%	where ${\lbrace\bm{D}_j\rbrace}_{j=0}^{N_I-1}$ are known matrices whose columns provide bases for their corresponding subspaces and ${\lbrace\bm{h}_j\rbrace}_{j=0}^{N_I-1}$ are ``unknown'' orientation vectors.
recovering $\bm{h}_j$ is equivalent to recovering $\bm{x}_j$ and incorporating the latter subspace constraints reduces the number of input variables from $\mathcal{O}(N_IL)$ to $\mathcal O(\sum_{j=0}^{N_I-1}K_j)$. We also assume that the columns of $\bm{D}_j^{\rm H}$ are sampled independently from a certain distribution satisfying isotropy and incoherence conditions (see \cite{Waking2016Wave} for more technical discussion).  Combining the definition of the Dirichlet kernel $D_N(t):=\tfrac{1}{L}\sum^N_{m=-N}e^{i2\pi tm}$
 and $\bm{x}_j(l)=\bm{d}_l^{j^H}\bm{h}_j$ in \eqref{B}, leads to (see Appendix A \cite{sayyari2020blind} for detailed derivation): %(see Appendix \ref{conversion to matrix-vector form} {\color{\change}in the supplementary page for detailed derivation}):}
\vspace{-0.1cm}
\begin{align}\label{E}
\begin{split}%\scriptstyle
&y(p)\!=\!\!\sum^S_{k=1}\!b_k\!\!\sum^{N_I-1}_{j=0}\!\!\sum^N_{m,l=-N}\!\!\!\!\!\!\!D_N(\tfrac{l}{L}-\tau_k)D_N(\tfrac{m}{L}-\nu_k) \bm{d}_{p-l}^{j^H}\bm{h}_j e^{\!\!\tfrac{i2\pi mp}{L}}\!\!\!\!.
\end{split}
\end{align}
Define vector $\bm{s}:=[\tau,\nu]^T$ and atoms $\bm{a}_j(\bm{s}_k)\in\mathbb{C}^{L^2}$ as:
\begin{align}\label{F}
\left[\bm{a}_j(\bm{s}_k)\right]_{((m,l),1)}=D_N\left(\tfrac{l}{L}-\tau_k\right)D_N\left(\tfrac{m}{L}-\nu_k\right).
\end{align}
The dictionary part $\widetilde{\bm{D}}_p^j \in \mathbb{C}^{L^2\times{K_j}}$ is defined as:
\begin{align}\label{G}
\left[\widetilde{\bm{D}}_p^j\right]_{((m,l),:)}=e^{\tfrac{i2\pi mp}{L} }\bm{d}_{p-l}^{j^H}\qquad p,l,m=-N,\ldots,N.
\end{align}
{\color{\change}Now, we substitute (\ref{F}) and (\ref{G}) in (\ref{E}) and exploit the lifting trick \cite{ahmed2013blind}, \cite{chi2016guaranteed} which is used for transforming nonlinear inverse problems into the problem of recovering a low-rank matrix from an under-determined system of linear equations.} This leads to:
\vspace{-0.05cm}
\begin{align}\label{H}
y(p)&\!\!=\!\!\sum^S_{k=1}\sum^{N_I-1}_{j=0}\!\!\! b_k\bm{a}_j^ \textrm H (\bm{s}_k)\widetilde{\bm{D}}^j_p\bm{h}_j\notag\\
&=\textrm {Tr}\!\left(\sum^{N_I-1}_{j=0} \!\!\!\widetilde{\bm{D}}_p^j\sum^S_{k=1}\!b_k\bm{h}_j\bm{a}_j(\bm{s}_k)^H\!\right)\!=\!\!\sum^{N_I-1}_{j=0}\!\!\langle{\bm{B}_j,\widetilde{\bm{D}}^{j  ^ \textrm H }_p}\rangle,
\end{align}
where $\bm{B}_j:=\sum^S_{k=1}\! b_k\bm{h}_j\bm{a}_j ^ \textrm H (\bm{s}_k)$. We define a linear operator $\chi:\oplus_{j=0}^{N_I-1}\mathbb{C}^{K_j\times L^2}\!\!\! \rightarrow \mathbb{C}^{L}$ that maps a {\color{\changes}matrix tuple} to a vector and the input {\color{\changes}matrix tuple}   $\mathcal B := (\bm{B}_j) ^{N_I-1} _{j=0}\in \oplus^{N_I-1}_{j=0}\mathbb{C}^{K_j\times L^2}$. Therefore, the observation vector $\bm{y} := [y(-N), \ldots, y(N)]^{\rm T}$ can be represented as $\bm{y}=\chi(\mathcal B)$.
%
%\vspace{-0.2cm}
%\begin{align}\label{I}
%.
%\end{align}
%
In practice, the number of shifts $S$ is much smaller than the number of samples $L$. {\color{\change}Thus, each matrix $\bm{B}_j$ of the matrix tuple $\mathcal{B}$ can be described as a sparse linear combination of matrix elements taken from the atomic set
\begin{align}\label{J}
\mathcal{A}_j=\left\lbrace \bm{h}_j\bm{a}_j ^ \textrm H (\bm{s}): \bm{s}\in[0,1]^2,\parallel\! \bm{h}_j\!\parallel_2=1,\bm{h}_j\in\mathbb{C}^{K_j}\right\rbrace.
\end{align}
In fact, the atoms $\bm{h}_j\bm{a}_j ^ \textrm H (\bm{s})$ forms the building blocks of $\bm{B}_j$. To promote sparsity, we aim at minimizing the number of active atoms corresponding to each $\bm{B}_j$ subject to the constraints \eqref{H}. Thus, we have multiple non-convex objective functions. We instead{\color{\changes},} propose to minimize a relaxed function which is the sum of some convex functions:
%{\color{\change}where $\text{conv}(\mathcal{A}_j)$ is the convex hull of $\mathcal{A}_j$. Atomic norm of $\bm{B}_j$ seeks the tightest convex relaxation of decomposing the matrix $\bm{B}_j$ into smallest number of atoms.}
%We propose the following optimization problem to recover the time-frequency shift pairs:
%
\begin{align}\label{L}
\textrm P_1 \!\!: & \min_{\widetilde{\bm{B}}_j\in \mathbb{C}^{K_j\times L^2}}\!\!\sum^{N_I-1}_{j=0}\!\!\!\parallel\!\! \widetilde{\bm{B}}_j\!\!\parallel_{\mathcal{A}_j} ~~\text{s.t.} ~~ y(p)\!=\!\!\!\sum^{N_I-1}_{j=0}\!\!\langle\widetilde{\bm{B}}_j, \widetilde{ \bm{D}}^{j^\textrm H}_p\rangle,
\end{align}
where
\begin{align}\label{K}
\begin{split}
&\parallel \!\bm{B}_j\!\parallel_{\mathcal{A}_j}=\text{inf}\ \lbrace t>0:\bm{B}_j\in t\, \text{conv}(\mathcal{A}_j)\rbrace
\\ & =\!\!\underset {b_k\in\mathbb{C},\bm{s}\in[0,1]^2,\parallel \bm{h}_j\parallel_2=1}{\!\!\text{inf}} \lbrace \sum_{k} |b_k|\! :\bm{B}_j \!=\! \sum ^{}_{k} b_k \bm {h}_j \bm{a}_j ^ \textrm H (\bm{s}_k)  \rbrace,
\end{split}
\end{align} 
is called the atomic norm known as the best convex surrogate for the number of active atoms composing $\bm{B}_j$ \cite{chandrasekaran2012convex}.}
The optimization to calculate $\|\bm{B}_j\|_{\mathcal{A}_j}$ is over infinite dimensional variables and thus computationally intractable. To cope with this issue, we consider the dual problem and find a semidefinite relaxation for it in Subsection \ref{sec.dual_approach}. Aside from this, we propose a grid-based approach (dividing the region $[0,1]^2$ into grids) in Subsection \ref{sec.grid_based} to solve (\ref{L}) directly, leading to a reduced computational burden. 
\vspace{-0.1cm}
\subsection{Dual Approach}\label{sec.dual_approach}
The dual problem of (\ref{L}) is given by (the detailed derivation is included in Appendix B of \cite{sayyari2020blind}): %(the detailed derivation is included in Appendix \ref{Proof of the dual problem} {\color{\change}of the supplementary page}):
\vspace{-0.15cm}
\begin{align}\label{I}
\begin{split}
&\underset{\bm{q}\in\mathbb{C}^{L}}{\text{max}}\quad\langle \bm{q},\bm{y}\rangle_\mathbb{R} ~~~\text{s.t.} ~~\parallel [\chi^*(\bm{q})]_j\parallel^d_{\mathcal{A}_j}\leqslant 1,
\end{split}
\end{align}
{\color{\change}where $\|.\|^d_{\mathcal{A}_j}$ denotes the atomic dual norm and} $\chi^*:\mathbb{C}^{L}\rightarrow \bigoplus^{N_I-1}_{j=0}\mathbb{C}^{K_j\times L^2}$ denotes the adjoint operator of $\chi$ such that $[\chi^*(\bm{q})]_j=\sum^N_{p=-N}\bm{q}_{p}\widetilde{\bm{D}}^{j^H}_p$ (see Appendix C of \cite{sayyari2020blind} for details). The dual norm in \eqref{I} is obtained as: %(see Appendix \ref{adjoint operator} {\color{\change} of the supplementary page} for details). The dual norm in \eqref{I} is obtained as:
\begin{align}\label{II}
\begin{split}
\parallel[\chi^*(\bm{q})]_j\parallel^d_{\mathcal{A}_j} & := \!\!\! \underset{\bm{s}\in[0,1]^2 \, , \, \parallel \bm{h}_j \parallel_2=1}{\text{sup}} \!\!\!  |\langle \bm{h}_j,[\chi^*(\bm{q})]_j\bm{a}_j(\bm{s})\rangle|\\
&=\underset{\bm{s}\in[0,1]^2}{\text{sup}}\parallel[\chi^*(\bm{q})]_j\bm{a}_j(\bm{s})\parallel_2\leqslant1.
\end{split}
\end{align}
Replacing (\ref{II}) in (\ref{I}) yields:
\begin{align}\label{N}
\begin{split}
\textrm P'_1 : \quad & \underset{\bm{q} \in \mathbb{C}^{L}} {\text{max}} \quad \langle \bm{q},\bm{y}\rangle_\mathbb{R} ~~\text{s.t.}~~\parallel\![\chi^*(\bm{q})]_j\bm{a}_j(\bm{s})\!\parallel_{2}\leqslant 1  ~~\bm{s}\in[0,1]^2%,~\forall \bm{s}\in[0,1]^2.
\end{split}
\end{align}
The primal convex problem (\ref{L}) has only equality constraint. Therefore, strong duality holds and in the optimal points we can claim $\sum^{N_I-1}_{j=0}\!\!\parallel\!\!\! \hat{\bm{B}}_j\!\!\!\parallel_{\mathcal{A}_j}=\langle \hat{\bm{q}},\hat{\bm{y}}\rangle_{\mathbb{R}}$.
%The constraint of primal problem (\ref{L}) are equalities. Therefore, strong duality holds and  in the optimal points $\sum^{N_I-1}_{j=0}\parallel \hat{\bm{B}}_j\parallel_{\mathcal{A}_j}=\langle \hat{\bm{q}},\hat{\bm{y}}\rangle_{\mathbb{R}}$.
The following theorem states conditions for optimality and uniqueness of the solution to this problem. Define the dual polynomial function as:
\begin{align}\label{O}
&\bm{f}_j(\bm{s}) := [\chi^* (\bm{q})] _j \bm{a}_j (\bm{s}) = \sum^{N}_{p=-N} \bm{q}_p {\widetilde{\bm{D}}^{j^ \textrm H}_p} \bm{a}_j(\bm{s}) \in \mathbb{C} ^{K_j}\nonumber\\
&=\sum^N_{p,m=-N}\!\!\!\!(\tfrac{1}{L} \ \bm{q}_{p}\sum^N_{l=-N}\!\!\!\bm{d}^j_l e^{\tfrac{i2\pi m(p-l)}{L}}) e^{-i2\pi (m\tau+p\nu)},
\end{align}
{\color{\changes}where the last equality is deduced from a similar discussion in \cite[Appendix C]{suliman2018blind}.}
\vspace{-0.1cm}
\begin{thm}\label{prop.optimality and uniqness of the solution} For the true support $\mathcal{S}=\lbrace \bm{s}_k\rbrace_{k=1}^{|S|}$ and the observation vector according to (\ref{H}), the matrix tuple $\hat {\mathcal B} = \mathcal B$ is the unique optimal solution of (\ref{L}) provided that the following conditions hold:
	\begin{enumerate} \item \label{eq.cond1} There exist 2D trigonometric vector polynomials (\ref{O}) with complex coefficients $\bm{q}=[q(-N),\ldots,q(N)]^{\rm T}$ such that:
		\vspace{-0.3cm}
		\begin{align}
		&\bm{f}_j(\bm{s}_k)=\text{sign}(b_k)\bm{h}_j,\forall \bm{s}_k \in \mathcal{S}\label{P}\\
		&{\color{\changess}{b}_k\in\mathbb{C}, \bm{h}_j\in\mathbb{C}^{K_j}, \parallel\!\!\bm{h}_j\!\!\parallel_2=\!1}\notag\\
		&\parallel \!\bm{f}_j(\bm{s}) \!\parallel_2 \textless 1, \ \forall \bm{s} \in [0,1]^2\ \setminus \ \mathcal{S}\label{Q}\\
		&\ j=0, ..., N_I-1 \notag
		\end{align}	
		\vspace{-0.5cm} 
		\item \label{eq.cond2} The sets $ \begin{Bmatrix} \bm{a}_j^ \textrm H (\bm{s}_k) \widetilde{\bm{D}}^j_{-N}, \ldots ,\bm{a}_j^ \textrm H (\bm{s}_k) \widetilde{\bm{D}}^j_{N}
		\end{Bmatrix}^{\!S}_{\!k=1}$\!\!\! are linearly independent.
	\end{enumerate}	
\end{thm}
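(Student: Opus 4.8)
The plan is to use the dual certificate $\bm{q}$ of conditions \eqref{P}--\eqref{Q} as a single witness that certifies optimality and, through its strict behavior off the support, uniqueness. This is the standard dual-certificate scheme for atomic-norm problems, here adapted to the complex, vector-valued atoms $\bm{h}_j\bm{a}_j^{\textrm H}(\bm{s})$.

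First I would establish optimality. Since $\parallel\!\bm{h}_j\!\parallel_2=1$ and $|\mathrm{sign}(b_k)|=1$, condition \eqref{P} forces $\parallel\!\bm{f}_j(\bm{s}_k)\!\parallel_2=1$ on $\mathcal{S}$, while \eqref{Q} gives $\parallel\!\bm{f}_j(\bm{s})\!\parallel_2\le 1$ everywhere; by \eqref{II} this is precisely the statement that $\bm{q}$ is feasible for the dual \eqref{N}. It remains to match the objectives. Using $\bm{y}=\chi(\mathcal{B})$ and the adjoint identity $\langle\bm{q},\chi(\mathcal{B})\rangle_{\mathbb{R}}=\mathrm{Re}\sum_j\langle[\chi^*(\bm{q})]_j,\bm{B}_j\rangle$, I would substitute $\bm{B}_j=\sum_k b_k\bm{h}_j\bm{a}_j^{\textrm H}(\bm{s}_k)$ and pull $[\chi^*(\bm{q})]_j$ through $\bm{a}_j(\bm{s}_k)$ to produce $\bm{f}_j(\bm{s}_k)$. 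The interpolation condition \eqref{P} then collapses each inner product to $b_k\,\overline{\mathrm{sign}(b_k)}=|b_k|$, yielding $\langle\bm{q},\bm{y}\rangle_{\mathbb{R}}=\sum_j\sum_k|b_k|=\sum_j\parallel\!\bm{B}_j\!\parallel_{\mathcal{A}_j}$. Weak duality upper-bounds the primal value by any dual-feasible objective, so this equality certifies that $\mathcal{B}$ is a minimizer and $\bm{q}$ is dual optimal.

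Next, uniqueness. Let $\widetilde{\mathcal{B}}=(\widetilde{\bm{B}}_j)$ be any optimizer and fix an atomic decomposition $\widetilde{\bm{B}}_j=\sum_{k'}c_{k'}\bm{g}_j\bm{a}_j^{\textrm H}(\bm{s}'_{k'})$ attaining $\parallel\!\widetilde{\bm{B}}_j\!\parallel_{\mathcal{A}_j}=\sum_{k'}|c_{k'}|$. Chaining optimality with dual feasibility forces every inequality in $\mathrm{Re}\sum_j\langle[\chi^*(\bm{q})]_j,\widetilde{\bm{B}}_j\rangle\le\sum_j\parallel\!\widetilde{\bm{B}}_j\!\parallel_{\mathcal{A}_j}$ to be an equality; term by term this requires $\parallel\!\bm{f}_j(\bm{s}'_{k'})\!\parallel_2=1$ with $\bm{g}_j$ aligned with $\bm{f}_j(\bm{s}'_{k'})$. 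By the strict bound \eqref{Q}, $\parallel\!\bm{f}_j(\bm{s})\!\parallel_2=1$ can hold only on $\mathcal{S}$, so every active atom of every $\widetilde{\bm{B}}_j$ sits at a true shift $\bm{s}_k$, and its direction is pinned to $\bm{h}_j$ by \eqref{P}. Thus $\widetilde{\bm{B}}_j=\sum_k\widetilde{b}_k\bm{h}_j\bm{a}_j^{\textrm H}(\bm{s}_k)$ is supported on $\mathcal{S}$ with the true orientation. Feasibility then gives $\chi(\mathcal{B}-\widetilde{\mathcal{B}})=\bm{0}$, and since both tuples are supported on $\mathcal{S}$, I would invoke condition~\ref{eq.cond2}: it renders the restriction of $\chi$ to $\mathcal{S}$-supported tuples injective, forcing $\widetilde{b}_k=b_k$ and hence $\widetilde{\mathcal{B}}=\mathcal{B}$.

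The main obstacle I expect lies in the uniqueness step rather than in optimality. Two points need care. First, promoting the non-strict duality inequality into the pointwise rigidity ``$\parallel\!\bm{f}_j(\bm{s}'_{k'})\!\parallel_2=1$ and $\bm{g}_j$ aligned'' for complex, vector-valued atoms: one must check that equality in the dual-norm bound \eqref{II} leaves no slack in either the phase of $c_{k'}$ or the direction $\bm{g}_j$. Second, translating condition~\ref{eq.cond2} into genuine injectivity of $\chi$ on $\mathcal{S}$-supported tuples, i.e.\ verifying that $y(p)=\sum_k\widetilde{b}_k\sum_j\big(\bm{a}_j^{\textrm H}(\bm{s}_k)\widetilde{\bm{D}}_p^j\big)\bm{h}_j$ determines the coefficients $\widetilde{b}_k$ uniquely. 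Existence of the certificate $\bm{q}$ is assumed as hypothesis~\ref{eq.cond1}, so I would not construct it here; in the accompanying analysis it would be built by interpolation against the Dirichlet-type kernel $D_N$, which is where a minimum-separation condition on $\mathcal{S}$ ultimately enters.
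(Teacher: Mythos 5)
Your proposal follows essentially the same route as the paper's proof: dual feasibility of $\bm{q}$ from \eqref{P}--\eqref{Q}, a zero-duality-gap computation via the interpolation condition to certify optimality of $\mathcal{B}$, and then the strict off-support bound \eqref{Q} to pin any other optimizer's atoms to $\mathcal{S}$, with condition~\ref{eq.cond2} supplying injectivity on $\mathcal{S}$-supported tuples. The only cosmetic difference is that the paper phrases the uniqueness step as a contradiction (assuming a solution with support $\bar{\mathcal{S}}\neq\mathcal{S}$ and deriving a strict inequality violating strong duality), whereas you argue the contrapositive directly; the substance is identical.
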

\begin{proof}
	If $\bm{q}$ satisfies (\ref{P}) and (\ref{Q}), it will be in the feasible set of (\ref{N}) which is observed from \eqref{II} and \eqref{O}. Conversely, if $\bm{q}$ satisfies (\ref{P}) and (\ref{Q}), then $(\mathcal B$,$\bm{q})$ is a primal-dual optimal solution pair. To show this, by the definition of atomic norm in \eqref{K}, we have:
	\vspace{-0.3cm}
	\begin{align}%\label{R}
	\begin{split}
	&\langle \bm{q},\bm{y}\rangle_{\mathbb{R}}= \langle \chi^*(\bm{q}),\mathcal{B}\rangle_{\mathbb{R}} =\sum^{N_I-1}_{j=0}\langle [\chi^*(\bm{q})]_j,\bm{B}_j\rangle_{\mathbb{R}}\\
	&\!\!\stackrel{\eqref{H},\eqref{O}}{=}\!\!\sum^{N_I-1}_{j=0}\!\sum^S_{k=1} \!\!Re\lbrace b_k^*\langle \bm{h}_j,\bm{f}_j(\bm{s}_k)\rangle\rbrace\stackrel{\eqref{P}}{=}\!\sum^{N_I-1}_{j=0}\!\sum^S_{k=1} \!\!Re \lbrace b^*_k \text{sign}(b_k)\rbrace\\
	&=\sum^{N_I-1}_{j=0}\sum^S_{k=1} |b_k|\stackrel{\eqref{K}}{\geqslant} \sum^{N_I-1}_{j=0}\parallel \bm{B}_j \parallel _{\mathcal{A}_j}
	\end{split}\notag
	\end{align}
	\vspace{-0.1cm}
	On the other hand,
	\begin{align}%\label{S}
	\begin{split}
	&\langle \bm{q},\bm{y}\rangle_{\mathbb{R}}= \langle \chi^*(\bm{q}),\mathcal{B}\rangle_{\mathbb{R}}=\sum^{N_I-1}_{j=0}\langle [\chi^*(\bm{q})]_j,\bm{B}_j\rangle_{\mathbb{R}}\\
	&\leqslant \sum^{N_I-1}_{j=0} \parallel [\chi^*(\bm{q})]_j\parallel^d_{\mathcal{A}_j}\parallel \!\bm{B}_j \parallel_{\mathcal{A}_j}\stackrel{\eqref{P},\eqref{Q}}{\leqslant} \sum^{N_I-1}_{j=0} \parallel\! \!\bm{B}_j\! \parallel_{\mathcal{A}_j}.
	\end{split}\notag
	\end{align} 
	We conclude that $\sum^{N_I-1}_{j=0} \!\!\parallel {\!\bm{B}}_j\!\parallel_{\mathcal{A}_j}=\langle \bm{q},\bm{y}\rangle_{\mathbb{R}}$. Hence, $\mathcal{B}$ is the primal optimal and  $\bm{q}$ is the dual optimal solution. To check the uniqueness, assume that $\bar{\mathcal B}$ is another solution supported on $\bar{\mathcal{S}}\neq {\mathcal{S}}$ with $\bar{\bm{B}}_j := \sum_{ \bar{ \bm{s}}_k \in \bar{ \mathcal{S}}} \bar{b}_k \bar{\bm{h}}_j\bm{a}_j ^ \textrm H (\bar {\bm{s}_k})$, then:
	\begin{align}%\label{T}
	\begin{split}
	&\langle\bm{q},\bm{y}\rangle_{\mathbb{R}}=\langle\chi^*(\bm{q}),\bar{\mathcal{B}}\rangle_{\mathbb{R}}=\sum^{N_I-1}_{j=0} \langle[ \chi^* (\bm{q})]_j , \bar{\bm{B}}_j \rangle _{\mathbb{R}}\\
	& = \!\!\sum_{j}\sum_{\bar{\bm{s}}_k\in{\mathcal{S}}} \!\!\!Re\lbrace \bar{b}_k^*\langle \bar{\bm{h}}_j,\bm{f}_j(\bar{\bm{s}}_k)\rangle\rbrace\!+\!\!\!\sum_{j}\sum_{\bar{\bm{s}}_k\in\bar{{\mathcal{S}}} \setminus \mathcal{S}} \!\!\!\!\!\!Re\lbrace \bar{b}^*_k\langle\bar{\bm{h}}_j,\bm{f}_j(\bar{\bm{s}}_k)\rangle\rbrace \\
	& \stackrel{\eqref{Q}}{\textless} \sum_{j}\sum_{\bar{\bm{s}}_k\in{\mathcal{S}}}|\bar{b}_k|+\sum_{j}\sum_{\bar{\bm{s}}_k\in\bar{{\mathcal{S}}}\setminus \mathcal{S}} \!\!\!|\bar{b}_k|=\sum^{N_I-1}_{j=0}\parallel \bar{\bm{B}}_j \parallel_{\mathcal{A}_j}.\notag
	\end{split}
	\end{align}
	%
	%For detailed derivation of \eqref{T} refer to \cite[Appendix B]{suliman2018blind}. 
	Since the set of atoms and their shifts in $\mathcal{S}$ are linearly independent, having the same support means $\bar{\mathcal B}={\mathcal B}$. This result violates strong duality and therefore, $\mathcal B$ is the unique solution of (\ref{L}). 
\end{proof}
{\color{\change}Intuitively, the condition \eqref{eq.cond1} ensures that the primal and dual problems have zero duality gap achieved by the primal solution $\mathcal{B}$ and dual solution $\bm{q}$. The condition \eqref{eq.cond2} ensures that $\mathcal{B}$ is the unique optimal primal solution. A consequence of Theorem \ref{prop.optimality and uniqness of the solution} is that the dual solution $\bm{q}$ of \eqref{N} provides a way to determine the composing time and frequency shifts of $\mathcal{B}$. One could evaluate the vector-valued dual polynomials ${\lbrace\bm{f}_j(\bm{s})\rbrace}_{j=0}^{N_I-1}$ and localize the time and frequency shifts by identifying locations where $\|\bm{f}_j(\bm{s})\|_2$ achieves one.} %It is also worth noting that the incoherence of each subspace ${\rm span}(\bm{B}_j)$ directly affects
%the required number of measurements for unique recovery.
%\vspace{-0.1cm}

However, the infinite number of constraints in the dual problem (\ref{N}) makes it computationally intractable. {\color{\change}To overcome this difficulty, 
we transform the $\ell_2$ ball of 2D trigonometric polynomial (the constraint of $\textrm P'_1$) into some linear matrix inequalities based on \cite{xu2014precise} or \cite[Chapter 9.4]{dumitrescu2017positive} and formulate \eqref{N} as the following semidefinite program (SDP):}
%It is straightforward to find that the dual polynomial functions in (\ref{O}) can be expressed as (see e.g. \cite[Appendix C]{suliman2018blind}):
%
%\vspace{-0.1cm}
%\begin{align}\label{U}
%\bm{f}_j(\bm{s})\!=\!\!\!\!\sum^N_{p,m=-N}\!\!\!\!(\tfrac{1}{L} \ [\bm{q}]_{p}\sum^N_{l=-N}\!\!\!\bm{d}^j_l e^{\tfrac{i2\pi m(p-l)}{L}}) e^{-i2\pi (m\tau+p\nu)}
%\end{align}   
%
%The constraint of Finally, the SDP relaxation can be written as:
\vspace{-0.05cm}
\begin{align}\label{Pzeg1}
\begin{split}
\textrm P''_1:\quad &\underset{\bm{q},\bm{Q}_j\succeq 0}{\text{max}} \langle \bm{q},\bm{y}\rangle_\mathbb{R}\\
\textrm {s.t.} \, & \begin{pmatrix}
\bm{Q}_j&\widetilde{\bm{Q}}^H_j\\
\widetilde{\bm{Q}}_j & \bm{I}_{K_j}
\end{pmatrix}
\succeq 0 \, , \, \textrm {Tr} ((\bm{\theta}_m\otimes \bm{\theta}_l)\bm{Q}_j)=\delta_{m,l}%\\
%&j=0,..., N_I-1, ~m,l=-N,..., N,
%\notag
\end{split}
\end{align}
where  $\widetilde{\bm{Q}}_j \in \mathbb{C}^{K_j\times L^2}$ is defined as:
\vspace{-0.25cm}
\begin{align}
[\widetilde{\bm{Q}}_j]_{(i,(p,m))}=\tfrac{1}{L} \ \bm{q}_{p}\sum^N_{l=-N}\bm{d}^j_l e^{\tfrac{i2\pi m(p-l)}{L}}
\end{align}
and $\bm{\theta}_i$ is a Toeplitz matrix with ones on its $i$-th diagonal and zeros elsewhere.
%{\color{\change}As we will explain later, once the time and frequency shifts are estimated, the coefficients $b_k \bm{h}_j$ can be obtained by solving an over-determined linear system of equations.}  
{\color{\change} In general, we perform the following steps:
\begin{enumerate} 
	\item Solve (\ref{Pzeg1}) via one of the standard solvers e.g. SDPT3 of CVX package \cite{cvx} to find an estimate for the dual solution $\bm{q}$. 
%	\item Obtain an expression for the  dual polynomials ${\lbrace\bm{f}_j(\bm{s})\rbrace}_{j=0}^{N_I-1}$ (\ref{U}) as functions of the vector variable $\bm{s}$. 
	\item Determine the time-frequency shifts $(\hat \tau_k,\hat \nu_k,S)$ by computing the roots of the dual polynomials $ \bm{f}_j(\bm{s}_k)$ on the unit circle. This method leads to recovery with very high precision as shown in \cite{candes2014towards}. Another approach is to discretize the region $[0,1]^2$ on a fine grid up to a desired accuracy in order to identify the locutions $\hat{\bm{s}}_k$ where $\|\bm{f}_j(\hat{\bm{s}}_k)\|_2 \approx\!1$. We use this method in our numerical simulations in Section \ref{sec.simulations}. %However,  here we use another approach. We plot the functions $\parallel\! \bm{f}_j(\bm{s}_k)\!\parallel_2$ on a fine grids in the interval $[0,1]^2$. This polynomials satisfy the condition $\|\bm{f}_j(\hat{\bm{s}}_k)\|_2 =\!1$ only in $S$ points. 
	%\item Solve (\ref{L}) to recover $(\hat \tau_k,\hat \nu_k)$ and $S$. 
	\item Find the least squares solution of the following linear equations to estimate $\hat b_k \hat{\bm{h}}_j$: %(in grid-based approach, $\bm{s}_k$ are replaced with grid points):
	\begin{align*}
	%\begin{split}
	\sum^{N_I-1}_{j=0} \scriptstyle \begin{bmatrix} \scriptstyle\bm{a}_j ^ \textrm H (\bm{s}_1) \widetilde{\bm{D}}^j_{-N} \ \cdots \ \bm{a}_j ^ \textrm H (\bm{s}_S) \widetilde{\bm{D}}^j_{-N}\scriptstyle\\\vdots\\\scriptstyle\bm{a}_j ^ \textrm H (\bm{s}_1) \widetilde{\bm{D}}^j_{+N} \ \cdots \ \bm{a}_j ^ \textrm H (\bm{s}_S) \widetilde{\bm{D}}^j_{+N}
	\end{bmatrix} \begin{bmatrix} \scriptstyle b_1\bm{h}_j\\\vdots\\\scriptstyle b_S\bm{h}_j
	\end{bmatrix}
	=\begin{bmatrix} y(-N)\\\vdots\\ y(N)
	\end{bmatrix}.
	%\end{split}	
	\end{align*} 
	\item By knowing the $\ell_2$ norm of the unknown message i.e. $\|\bm{h}_j\|_2$ (which is not a restrictive assumption e.g. in blind multi-user communications \cite{ahmed_blind}), we can estimate $|b_k|$ and thus $|\bm{h}_j|$.
\end{enumerate}
} 
\vspace{-0.3cm} 
\subsection{Grid-Based Approach} \label{sec.grid_based}
%Program $\textrm P''_1$ in \eqref{Pzeg1} is a high-precision method with high computational complexity of order ${\color{\changes}\mathcal O}(L^4)$. Here, we provide a grid-based approach with reduced computational complexity in exchange for higher estimation error.
{\color{\changess}Program $\textrm P''_1$ in \eqref{Pzeg1} is a high-precision method that leads to an exact solution, but its computational complexity is high with variables of dimension ${\mathcal O}(L^4)$. Here, we provide a grid-based approach with reduced computational complexity in exchange for higher estimation error. In this approach, we suppose that the time-frequency shifts $(\tau_k,\nu_k)$ lie on a fine $(1/G,1/G)$-grid and solve the discrete version of the primal problem \eqref{L} with variables of dimension ${\mathcal O}(L^2)$.} Thus, the observation vector is expressed as:
%Suppose the time-frequency shifts $(\tau_k,\nu_k)$ lie on a $(1/G,1/G)$-grid. 
\vspace{-0.05cm}
\begin{align*}
&y(p)\!=\!\tfrac{1}{L}\!\!\sum_{r,s=0}^{G-1} \!\!\!u_{r,s} \,e^{i2\pi p\tfrac {r}{G}}\!\!\sum_{j=0}^{N_I-1}\!\sum^N_{m,l=-N}\!\!\!\!\!\! x_j(l)e^{-i2\pi m\tfrac{s}{G}} e^{\tfrac{i2\pi m(p-l)}{L}},
\end{align*}
where $x_j(l)={\bm{d}_l^j}^ \textrm H \bm{h}_j$. Using the discrete version of the variables defined in the previous section, we get:
\vspace{-0.2cm}
\begin{align}\label{W}
y(p)\!=\!\!\sum^{G-1}_{r,s=0}\sum^{N_I-1}_{j=0}b'_{rs}\bm{a}_j(\bm{g})^H \widetilde{\bm{D}}^j_p\bm{h}_j=\sum^{N_I-1}_{j=0}\langle{\bm{B'}_j,\widetilde{\bm{D}}^{j^H}_p}\rangle,
\end{align}
where $\bm{g}=[r/G,s/G]$, $\bm{b'}\in \mathbb{C}^{G^2}$ is a sparse vector such that $\bm{b'}_{((r,s),1)}:=b'_{rs}$, and $\bm{B'}_j:=\sum^{G-1}_{r,s=0}b'_{rs}\bm{h}_j\bm{a}_j(\bm{g})^H$.

\vspace{0.15cm}
Now, we propose the following optimization program to recover the time-frequency shifts:
\vspace{-0.2cm}
\begin{align}\label{P_2}
\textrm P_2 \!:\!\min_{\bm{B'}_j\in \mathbb{C}^{K\times G^2}}\!\!\sum^{N_I-1}_{j=0}\!\!\!\parallel\! \bm{B'}_j\!\parallel_{*}~~\text{s.t.}~~y(p)\!=\!\!\sum^{N_I-1}_{j=0} \langle{\bm{B'}_j,\widetilde{\bm{D}}^{j^H}_p}\rangle%\\
%j=0,..., N_I-1,\notag
\end{align}
where $\parallel \cdot \parallel_{*}$ is the nuclear norm of a matrix which is regarded as an alternative for the atomic norm in the discrete setting.

%{\color{\change} Solving (\ref{L}) can be done using either of the dual or grid-based approach. In dual approach, we first solve SDP relaxation (\ref{Pzeg1}) via one of the standard solvers. Then, we obtain an expression for the  dual polynomials ${\lbrace\bm{f}_j(\bm{s})\rbrace}_{j=0}^{N_I-1}$ (\ref{U}) as functions of the vector variable $\bm{s}$. To determine the locations of the shifts, we can plot the functions $\parallel\! \bm{f}_j(\bm{s}_k)\!\parallel_2$ on a fine grids in the interval $[0,1]^2$. Also, \cite{candes2014towards} proposes another way based on calculating the roots of the polynomials $1-\parallel\! \bm{f}_j(\bm{s})\! {\parallel^2}_2$.  Theorem (\ref{prop.optimality and uniqness of the solution}) says that the estimated shifts are located in the points which $\parallel\! \bm{f}_j(\hat{\bm{s}}_k)\!\parallel_2 =\!1$. Note that each diagram reaches the value of one only in $S$ points. 
	
{\color{\change}In the grid-based approach, we are able to solve (\ref{P_2}) which is the discrete version of (\ref{L}) directly. Then, to recover the location of shifts, we compute the correlation between the estimated ${\lbrace\bm{B'}_j\rbrace}_{j=0}^{N_I-1}$ and the atoms and choose the locations with highest correlation. The remaining unknowns are obtained similar to Subsection \ref{sec.dual_approach}.}
\vspace{-0.1cm}
\section{Simulation Results} \label{sec.simulations}
In this section, we provide some experiments to confirm the accuracy of our proposed methods using CVX toolbox and SDPT3 package. Note that here we consider a minimum separation condition between time-frequency shift pairs {\color{\changes}similar to} \cite{heckel2016super}. In the first experiment, we have two input signals and a system with the time-frequency shift pair $(0.24,0.52)$, generated uniformly in the interval $[0,1]$. We set $N=7$, $K_1=K_2=1$, $S=1$ and $N_I=\!2$. The entries of $\bm{D}_j$, $\bm{h}_j$ and $b_k$ are generated from complex standard normal distribution subject to $\parallel\! \bm{h}_j\! \parallel_2 =\!1$ and $|b_k|=1$. Fig.\ref{fig.dualpoly} shows the dual polynomials in (\ref{O}) which achieve $1$ in the locations of true shift pairs. The top left image of Fig.\ref{fig.inputs+grid} illustrates perfect matching of the magnitude corresponding to the true and estimated input signals $x_j(l)$. Note that here we are not able to recover the phases of $\hat b_k$, instead, $\hat b_k \hat{\bm{h}}_j$ can be uniquely estimated. Since $|b_k|=1$, the least square algorithm estimates $|\hat x_j|$. 
For the grid-based approach of \eqref{P_2}, the shift pair is $(0.92,0.67)$ while $N_I\!=\!2$, $N\!=\!6$, $K_1\!=\!2$, $K_2\!=\!1$ and $S\!=\!1$. We define the super-resolution factor $\textrm {SRF} \!:=\! \tfrac{G}{L}$ and $\textrm {Error}\!:=\!\tfrac{L}{S} (\sum_{k=1}^S\sqrt{(\hat{\tau}_k\! -\! \tau_k)^2\!+\!(\hat{\nu}_k - \nu_k)^2})$. The right block of Fig.\ref{fig.inputs+grid} indicates that the error is inversely related to SRF. The bottom left image of Fig.\ref{fig.inputs+grid} shows good matching of the true and estimated signals when $SRF\!=\!20$.
{\color{\changess}In the last experiment, we evaluate the computational complexity of our methods by CPU time. We set the parameters as in the first experiment. We implement both $\textrm P''_1$ and $\textrm{P}_2$ using CVX package \cite{cvx}. While the CPU time of the dual approach $\textrm{ P}''_1$ is $5019$ seconds, the grid-based method ${\textrm P}_2$ is implemented in $576$ seconds. Our simulation is implemented using MATLAB 2016 on a laptop computer with specifications 2.0 GHz, Core i7, 8 GB RAM.}
\vspace{-0.2cm}
\begin{figure}[t!]\vspace{-0.5cm}
	%\hspace{-0.44cm}
	\includegraphics[scale=.19]{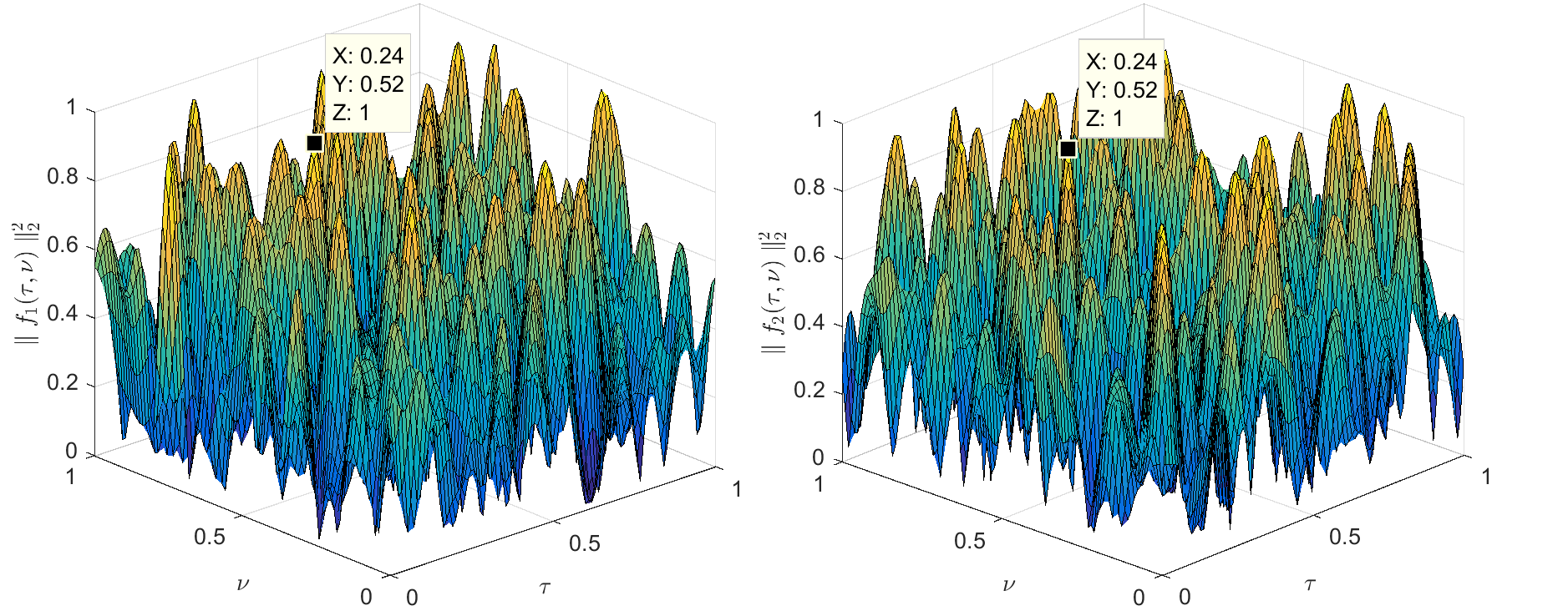}
	\caption{The Euclidean norm of the dual polynomial vectors \eqref{O} corresponding to each input. {\color{\changess}The marker shows the estimated time-frequency shifts corresponding to the locations where the dual polynomial has unit $\ell_2$ norm. The obtained location is the same as the true time-freq. shift pair $(\tau=0.24,\nu=0.52)$.}}
	\label{fig.dualpoly}
\end{figure}
\vspace{-0.1cm}
\begin{figure}[t!]\vspace{-0.2cm}
	\hspace{-0.1cm}
	\includegraphics[scale=.19]{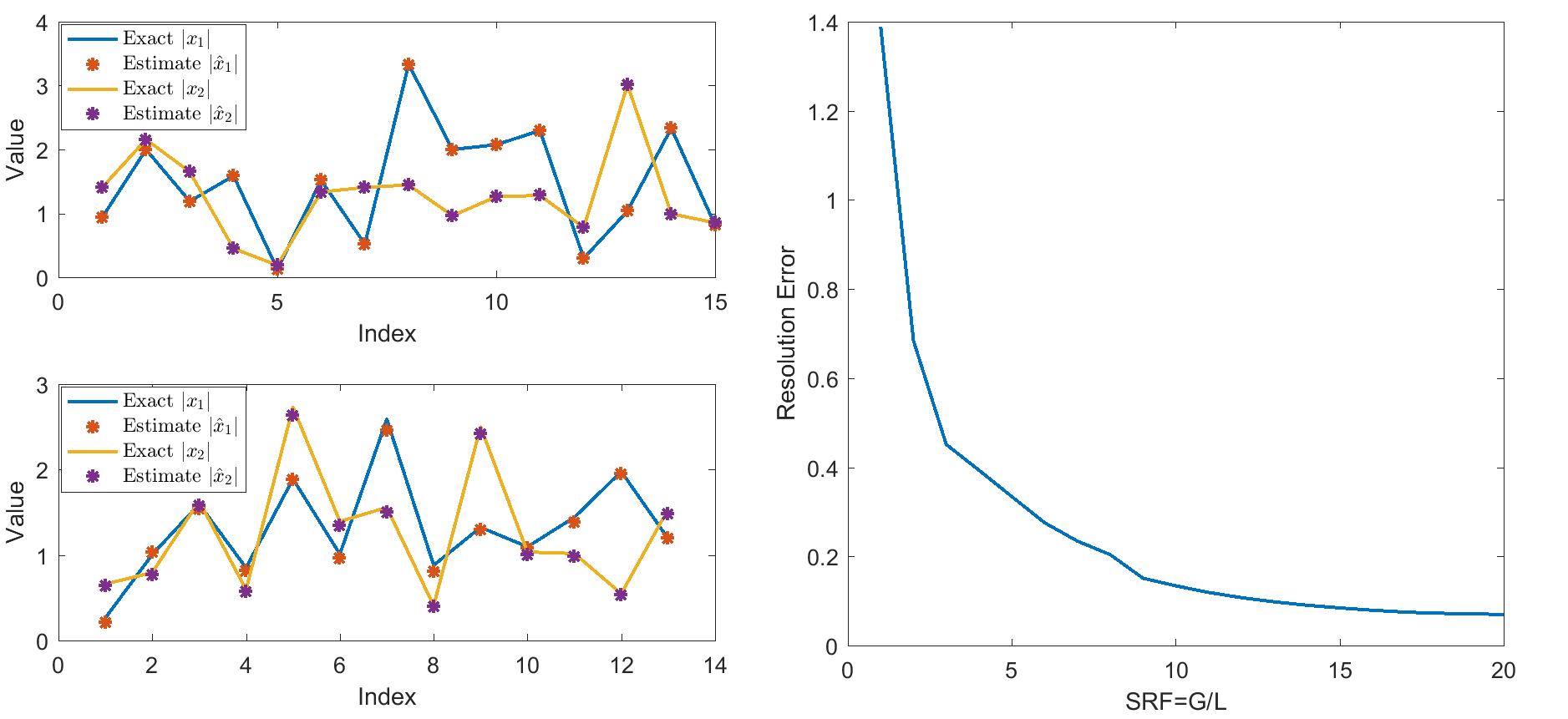}
	\caption{Left block: The estimated and true input signals match in the dual approach (upper graph) and grid-based approach (lower graph). Right block: Estimation error is inversely related to the super resolution factor (SRF) in the grid-based approach.} 
	\label{fig.inputs+grid}
\end{figure}
%\begin{figure}[t!]\vspace{-0.5cm}
%	\hspace{-0.3cm}
%	\includegraphics[scale=.4]{gridg.pdf}
%	\caption{Left block: The estimated and true input signals match in the dual approach (upper graph) and grid-based approach (lower graph). Right block: Estimation error inversely related to the super resolution factor (SRF) in the grid-based approach.} 
%	\label{fig.inputs+grid}
%\end{figure}
%
%\vspace{-0.3cm}
%\begin{figure}[t!]\vspace{-0.5cm}
%	\hspace{0.6cm}
%	\includegraphics[scale=.25]{}
%	\caption{Estimation error inversely related to the super resolution factor (SRF) in the grid-based approach.}
%	\label{fig.GridGraph}
%\end{figure}
\vspace{0.3cm}
{\color{\change}\section{CONCLUSION}
	In this work, we developed a mathematical model for 2D blind super-resolution in MISO LTV systems. The output signal in this model is a superposition of time and frequency shifted versions of multiple independent inputs. Our aim was to recover the time-frequency shift pairs and the inputs from only one output vector. We formulated the model as a convex optimization problem under different subspace assumptions for different unknown inputs. Then, a dual optimization and a grid-based approach were proposed for obtaining the off-grid time-frequency shifts and the input signals up to a scaling factor. Also, numerical simulations were provided to verify our proposed approach.
%	We found that if one observed the response of a LTV system to the time and frequency shifted versions of multiple independent inputs, via an atomic norm convex optimpzation problem, we can recover the location of time-frequency shifts with infinite precision and input signals with a scaling factor. 
}

\newpage
\appendices
\section{equivalence between (\ref{B}) and (\ref{E})}\vspace{-0.6cm}   \label{conversion to matrix-vector form}
\begin{align}
\begin{split}
&y(p)=%\tfrac{1}{L}\sum_{k=1}^Sb_k\sum_{j=0}^{N_I-1} (\sum^N_{m,l=-N}x_j(l)e^{\tfrac{i2\pi m}{L}(p-l)}e^{i2\pi (p\nu_k-m\tau_k)})\\
\tfrac{1}{L}\sum_{k=1}^Sb_ke^{i2\pi \nu_k p}\sum_{j=0}^{N_I-1}(\sum^N_{m,l=-N}x_j(l)e^{i2\pi (\tfrac{p-l}{L}-\tau_k)m})\\
& =\tfrac{1}{L}\sum^S_{k=1}b_ke^{i2\pi \nu_k p}\sum^{N_I-1}_{j=0}(\sum_{u=p-N}^{p+N}\sum^N_{k=-N}x_j(p-u)e^{i2\pi (\tfrac{u}{L}-\tau_k)m})\\
&  \stackrel{\text{(i)}}{=}\sum^S_{k=1}b_k\sum^{N_I-1}_{j=0}e^{i2\pi \nu_k p}(\tfrac{1}{L}\sum^N_{m,l=-N}x_j(p-l)e^{i2\pi (\tfrac{l}{L}-\tau_k)m})\\
&\stackrel{\text{(ii)}}{=}\sum^S_{k=1}b_k\sum^{N_I-1}_{j=0}e^{i2\pi \nu_k p}(\sum^{N}_{l=-N}x_j(p-l)D_N(\tfrac{l}{L}-\tau_k))\\
&\stackrel{\text{(iii)}}{=}\sum^S_{k=1}b_k \sum^N_{m,l=-N}D_N(\tfrac{m}{L}-\nu_k)D_N(\tfrac{l}{L}-\tau_k) x_j(p-l)e^{\tfrac{i2\pi ml}{L}} \notag
\end{split}
\end{align}
(i) is based on the periodicity property of $x_j(l)$ while (ii) is based on the definition {\color{\changess}of the Dirichlet kernel.} (iii) is deduced from the following fact:
\vspace{-0.3cm} 
\begin{align} 
\sum_{m=-N}^N D_N\left(\tfrac{m}{L}-\nu_k\right)e^{\tfrac{i2\pi ml}{L}}=e^{i2\pi \nu_kp}\notag
\end{align} 
\vspace{-0.6cm}
\section{Proof of the dual problem (\ref{I})} \vspace{-0.1cm} 
\label{Proof of the dual problem}
The Lagrangian function of (\ref{L}) is equal to: 
\begin{align}
\begin{split}
\mathcal{L}(\mathcal B, \bm{q}) & = \sum_j \parallel\! \bm{B}_j\!\parallel_{\mathcal{A}_j}+\langle \bm{q},\bm{y} -\chi(\mathcal B)\rangle
\end{split}\notag
\end{align}
\vspace{-0.2cm}
in which:
\begin{align}
\begin{split}
\langle \bm{q},\chi(\mathcal B)\rangle & = \langle \chi^* (\bm{q}), \mathcal B \rangle = \sum_j \langle[\chi^*(\bm{q})]_j,\bm{B}_j\rangle\\
&\leqslant \sum_j\parallel \!\bm{B}_j\! \parallel_{\mathcal{A}_j} \parallel [\chi^*(\bm{q})]_j\!\parallel^d_{\mathcal{A}_j}
\end{split}\notag
\end{align}
and Therefore,
\begin{align}
\begin{split}
\mathcal{L}(\mathcal B, \bm{q}) \geqslant \sum_j \parallel\! \bm{B}_j\! \parallel_{\mathcal{A}_j} (1-\parallel\! [\chi^*(\bm{q})]_j\!\parallel^d_{\mathcal{A}_j}) +\langle \bm{q},\bm{y} \rangle
\end{split}\notag
\end{align}
We obtain the dual function by minimizing over $\mathcal B$:
\begin{align}
\begin{split}
H(\bm{q}) = \!\!\!\!\!\!  \underset{{\mathcal B \in \oplus_j \mathbb{C}^{K_j\times L^2}}}{\inf} \!\!\!\!\!\! \mathcal{L}(\mathcal B, \bm{q}) 
=\left \lbrace
\begin{array}{ll}
\langle \bm{q},\bm{y}\rangle  & \parallel [\chi^*(\bm{q})]_j \!\! \parallel^d_{\mathcal{A}_j}\leqslant 1\\
+\infty & \textrm{otherwise}
\end{array}
\right .
\end{split}\notag 
\end{align}
So, the dual problem can be written as (\ref{I}).
\vspace{-0.2cm} 
\section{Adjoint Operator $\chi^*$}
\label{adjoint operator}
Beginning from (\ref{N}), we have:
\vspace{-0.1cm}
\begin{equation}
\label{eq:adj:qy}
\langle \bm{q},\bm{y}\rangle_{\mathbb{R}}= \langle\bm{q},\chi(\mathcal B)\rangle_{\mathbb{R}} =  \sum_{p,j} \langle \bm{B}_j,\widetilde{\bm{D}}^{j^ \textrm H}_p\rangle\,\bm{q}_{p} 
\end{equation}
\vspace{-0.2cm} 
We also have:
\begin{equation}
\label{eq:adj:qy2}
\langle \bm{q},\bm{y}\rangle_{\mathbb{R}}=\langle \chi^*(\bm{q}),\mathcal{B}\rangle_{\mathbb{R}}=\sum_j\langle \bm{B}_j,[\chi^*(\bm{q})]_j\rangle_{\mathbb{R}}
\end{equation}
From \eqref{eq:adj:qy} and \eqref{eq:adj:qy2} we can deduce:
\begin{align}
\begin{split}
&\sum_j\langle \bm{B}_j,\sum_p\bm{q}_{p}\widetilde{\bm{D}}^{j^ \textrm H}_p\rangle=\sum_j \langle \bm{B}_j,[\chi^*(\bm{q})]_j\rangle_{\mathbb{R}}
\end{split} \notag
\end{align}
Therefore:
\vspace{-0.2cm}
\begin{align}
[\chi^*(\bm{q})]_j=\sum_p\bm{q}_{p}\widetilde{\bm{D}}^{j^ \textrm H}_p \notag
\end{align}

\bibliographystyle{ieeetr}
\bibliography{mypaperbibenew}

\end{document}